%%%%%%%%%%%%%%%%%%%% author.tex %%%%%%%%%%%%%%%%%%%%%%%%%%%%%%%%%%%
%
% sample root file for your "contribution" to a contributed volume
%
% Use this file as a template for your own input.
%
%%%%%%%%%%%%%%%% Springer %%%%%%%%%%%%%%%%%%%%%%%%%%%%%%%%%%

% RECOMMENDED %%%%%%%%%%%%%%%%%%%%%%%%%%%%%%%%%%%%%%%%%%%%%%%%%%%
\documentclass[graybox]{svmult}

\usepackage{amsfonts}
\usepackage{amssymb}
\usepackage{amsmath}
\usepackage{amscd}      % adapts the commutative diagram macros of AMS-\TeX{}
\usepackage{fancyhdr}  

\usepackage{mathptmx}       % selects Times Roman as basic font
\usepackage{helvet}         % selects Helvetica as sans-serif font
\usepackage{courier}        % selects Courier as typewriter font
\usepackage{type1cm}        % activate if the above 3 fonts are
                            % not available on your system
%
\usepackage{makeidx}         % allows index generation
\usepackage{graphicx}        % standard LaTeX graphics tool
                             % when including figure files
\usepackage{multicol}        % used for the two-column index
\usepackage[bottom]{footmisc}% places footnotes at page bottom
\usepackage[vcentermath]{youngtab}

%%% PUT YOUR DEFINITIONS HERE - BEFORE \begin{document} 

\newcommand{\ba}{\begin{array}}
\newcommand{\ea}{\end{array}}
\newcommand{\beq}{\begin{equation}}
\newcommand{\eeq}{\end{equation}}
\newcommand{\beqa}{\begin{eqnarray}}
\newcommand{\eeqa}{\end{eqnarray}}

\def\C{{\mathbb C}}
\def\Z{{\mathbb Z}}
\def\g{{\mathfrak g}}
\def\h{{\mathfrak h}}
\def\n{{\mathfrak n}}
\def\p{{\mathfrak p}}
\def\f{{\mathcal V(p)}}

\def \rvac {\!\! \left. \left. \right| \! 0 \right\rangle}
\def\demi {\frac{1}{2}}

\begin{document}

\title*{Parafermionic algebras, their modules and cohomologies}
% Use \titlerunning{Short Title} for an abbreviated version of
% your contribution title if the original one is too long
\author{Todor Popov}
% Use \authorrunning{Short Title} for an abbreviated version of
% your contribution title if the original one is too long
\institute{Todor Popov \at INRNE, Bulgarian Academy of Sciences,
72 Tsarigradsko chauss\'ee, 1784 Sofia, Bulgaria \\ \email{tpopov@inrne.bas.bg}
%\and Name of Second Author \at Name, Address of Institute \email{name@email.address}
}
\maketitle

\abstract{ We explore the Fock spaces of the parafermionic algebra %generated by the creation and annihilation operators 
introduced by H.S. Green. Each parafermionic Fock space allows for a free minimal resolution by graded modules of the graded 2-step nilpotent subalgebra of the parafermionic creation operators. Such a free resolution is constructed with the help of a classical Kostant's theorem computing  Lie algebra cohomologies of the  nilpotent subalgebra with values in the parafermionic Fock space. %By calculation
The Euler-Poincar\'e characteristics of the parafermionic Fock space free resolution yields some interesting  identities between Schur polynomials.
Finally we briefly comment on parabosonic and general parastatistics Fock spaces.}

\section{Introduction}
\label{sec:1}

The parafermionic and parabosonic algebras were introduced by H.S. Green
as a inhomogeneous cubic algebra having as quotients  the fermionic and bosonic algebras with canonical (anti)commutation relations.
In an attempt to find a new paradigm for quantization of classical fields
H.S. Green introduced the parabosonic and parafermionic algebras
encompassing the bosonic and fermionic algebras based on the canonical
quantization scheme.
Here we are dealing with the Fock spaces of the  parafermionic algebra $\g$  of creation and annihilation operators.
These Fock spaces are particular parafermionic algebra modules build at
the top of a unique vacuum state by the creation operators. The creation operators close a free graded 2-step nilpotent algebra $\n$, %further referred as a {\it subalgebra} 
$\n \subset \g$.
The Fock space  of a parafermionic algebra $\g$ is then defined 
as a quotient module of the  free $\n$-module, where the quotient ideal 
stems from the generalization of the Pauli exclusion principle.
In this note we calculate the cohomologies $H^{\bullet}(\n, \f)$
of the nilpotent subalgebra $\n$ with coefficients in the parafermionic  Fock space $\f$ (taken as a $\n$-module).
The cohomology ring $H^{\bullet}(\n, \f)$ is obtained due to by now classical  Kostant's theorem \cite{Kostant}.
With the data of $H^{\bullet}(\n, \f)$ one is able to construct a 
minimal resolution by free $\n$-module of the Fock space $\f$.
Its existence is garanteed by the Henri Cartan's results on graded algebras. It turns out that the Schur polynomials  identities which have been recently put forward \cite{LSVdJ,SVdJ} by Neli Stoilova and Joris Van der Jeugt stem from the  Euler-Poincar\'e characteristics of the minimal free resolutions of the parafermionic and parabosonic Fock space.

\section{Parafermionic and parabosonic algebras}
\label{sec:2}

The parafermionic algebra $\g$ with finite number $n$ degrees of freedom
is a Lie algebra with a Lie bracket $[ \bullet , \bullet ]$
generated by the creation $a_i^{\dagger}$ and annihilation $a^j$
operators ($i,j=1,\ldots, n$)  having the following exchange relations
 
% Always give a unique label
% and use \ref{<label>} for cross-references
% and \cite{<label>} for bibliographic references
% use \sectionmark{}
% to alter or adjust the section heading in the running head

\beqa 
\ba{rcccrcc} 
[[ a^{\dagger}_{i},a^{j} ], a^{\dagger}_{k}] &=& 2 \delta_{k}^j 
a^{\dagger}_{i} \ ,&\quad
& [[ a^{\dagger}_{i},a^{j} ], a^{k}] &=& - 2 \delta_{i}^k a^{j} \ ,
\\[4pt] [ [a^{\dagger}_{i}, a^{\dagger}_{j}],a^{\dagger}_{k} ] &=&0 \ , & \quad &[
[a^{i}, a^{j}],a^{k} ] &=&0  \ .\ea \label{PCR} 
\label{1}
\eeqa
The parafermionic algebra $\g$ with finite number  degrees of freedom $n$ is isomorphic to  the  semi-simple Lie algebra 
\beq
\label{bn}
\g= \h \oplus \bigoplus_{\alpha\in \Delta_+} \g_{\alpha} 
\oplus \bigoplus_{\alpha\in \Delta_-} \g_{\alpha}  \ ,
 \eeq
for a root system $\Delta= \Delta_+\cup \Delta_-$ of type $B_n$ with positive roots $\Delta_+$ given by
$$\Delta_+= \{ e_i \}_{1\leq i \leq n} \cup \{e_i + e_j, e_i-e_j\}_{1\leq i<j\leq n} \ , \quad \mbox{and} \quad \Delta_-=-\Delta_+ \ .$$
Here $\{ e_i \}_{i=1}^n$ stand for the orthogonal basis in the root space,
$(e_i| e_j)=\delta_{ij}$.
One concludes that the parafermionic algebra $\g$ with $n$ degrees of freedom is isomorphic to the orthogonal algebra $\g \cong\mathfrak{so}_{2n+1}$ endowed with the anti-involution $\dagger$. 
The physical generators correspond to
the Cartan-Weyl basis $a^{\dagger}_i:=E^{e_i}$ and $a^j:=E^{-e_j}$ .

%It was shortly realized \cite{} that the latter parabosonic algebra   is isomorphic to the orthogonal Lie algebra 
%of Dynkin type $B_n$, i.e., the orthosymplectic algebra $\mathfrak {so}_{2n+1}$.
Similarly one defines the parabosonic algebra $\tilde{\g}$
with exchange relations (\ref{1}) as the Lie super-algebra
 endowed with a Lie super-bracket $[\bullet , \bullet ]$ whose generators $a^{\dagger}_i$ and $a^j$ are taken to be odd generators. The parabosonic algebra $\tilde{\g}$
with $m$ degrees of freedom
 is shown \cite{GP} to be isomorphic to the Lie super algebra of type $B_{0,m}$ in the Kac table, i.e., $\mathfrak{osp}_{1|2m}$. More generally,
 one defines the parastatistics algebra as the Lie super-algebra with $n$ even parafermionic  and $m$ odd parabosonic degrees of freedom.
 The parastatistics algebra is shown to be isomorphic to the
 super-algebra of type $B_{n,m}$, i.e., $\mathfrak{osp}_{2n+1|2m}$ \cite{Palev}. 
 Throughout this note we will concentrate on the parafermionic algebra and its representations.

\section{Parafermionic Fock space}
\label{sec:3}
 The parafermionic relations (\ref{1}) imply that
the generators $E_{i}^j=\demi [a_i^{\dagger},a^j]$ 
are the matrix units satisfying
\[
[E_{i}^j, E_{k}^l]= \delta_{k}^j E_{i}^l - \delta_{i}^l E_{k}^j \ .
\] 
These generators  close the
 real form $ \mathfrak{u}$ of  a linear algebra $\mathfrak{gl}_n$ with $(E_i^j)^{\dagger}=   E_{j}^i $.
 
 One  has decomposition of the parafermionic  Lie algebra into reductive algebra $\mathfrak{u}$ and nilpotent Lie algebras, $\n$ and $\n^\ast$
 $$\g  = \n^{\ast} \rtimes \mathfrak{u} \ltimes \n \ $$
 where $\mathfrak{u}$ is the real form of the linear algebra $\mathfrak{gl}_n$.
 The free 2-step nilpotent Lie subalgebra $\n\subset \g$ is generated in degree 1 by the {\it creation} operators $a_i^{\dagger}$, $V:=\bigoplus_i \C
 a_i^\dagger$
 $$ \n=\n_1\oplus \n_2= V \oplus \wedge^2 V 
% \qquad \qquad (\n^\ast= V^\ast \oplus \wedge^2 V^\ast) 
 \ .
 $$
 % The algebra $\n$ is positively graded with degrees $\n_1=\bigoplus_i \C
 %a_i^\dagger=:V$ and  $\n_2=[\n_1, \n_1]
 %\cong \wedge^2 V^\ast$ ( and $\n_i^\ast=0$ for $i\geq 3$).
Analogously the  annihilation operators $a_i$ generate
the subalgebra $\n^\ast= V^\ast \oplus 
 \wedge^2 V^\ast$. %$\n^{\ast}\subset \g$,
% $\n^{\ast}= \n^\dagger$.

 The vector space $V=\n_1$  is the  fundamental representation  for  the left action of the algebra $\mathfrak{gl}_n$,
 $E_{i}^j \cdot a_k^{\dagger}= \delta^j_{k} a_i^{\dagger}$. Similarly $V^\ast=\n_1^\ast$ is the fundamental
 representation for the  right  $\mathfrak{gl}_n$-action,
 $  a^k \cdot E_{i}^j= \delta_i^{k} a^j$. 
The %real form  $\mathfrak{u}$ of 
 linear algebra $\mathfrak{gl}_n$ acts on the algebras $\n$ and $\n^\ast$  by automorphisms.
%Let us consider an unique  vacuum state $\rvac$ such that it is annihilated by
%all annihilation operators $a^i$.

\begin{definition}
The parafermionic Fock  space  is
the unitary representation $\f$ of the parafermionic algebra
$\g \cong {\mathfrak{so}_{2n+1}}$
built on a unique vacuum vector $\rvac$ such that 
\beq
a_{i}\rvac =0 \ ,  \qquad 
[ a_{i},a_{j}^{\dagger} ]\rvac =p \delta_{ij} \rvac \ .
\eeq
The non-negative integer $p$ is called the order of the parastatistics.
\end{definition}

Let us single out a particular parabolic subalgebra $\mathfrak{p}=  \mathfrak{gl} \ltimes \n$. In the Fock representation the
vacuum module $\C\rvac $ is the trivial module for 
the subalgebra  $\mathfrak{p}^\ast= \n^\ast \rtimes \mathfrak{gl}$.
The  representation induced by $\mathfrak{p}^\ast$
acting on the vacuum module is isomorphic
the  universal enveloping algebra
%\footnote{The algebra $U\n$ was denoted $PS(V)$ in the paper \cite{LP} 
%as an abriviation of ParaStatistics} 
of the creation algebra $\n$
 $${\rm{Ind}}_{\p^\ast}^\g \C \rvac = U \g \otimes_{\p^\ast} \C \rvac \cong U \n \ .$$ 
Hence the Fock representation $\f$ which we now describe is a particular quotient of the  algebra $U \n$ created by the free action of the creation algebra $\n$. % acting on the vacuum $\rvac$.

The   $\f$ of parastatistics order $p$
 is a finite-dimensional $\g$-module with a unique Lowest  Weight vector $\rvac$ of weight $-\frac{p}{2} \sum_{i=1}^n e_i$
 and  a unique Highest Weight (HW) vector
 \beq
 \label{hwv}
  |\Lambda\rangle = (a_1^{\dagger })^p \dots (a_n^{\dagger})^p \rvac 
  %\, \,( \cong \langle 0 |^\ast )
 \eeq
    thus the $\mathfrak{so}_{2n+1}$-module $\f$  is a highest weight module of weight 
  $\Lambda$ %=\frac{p}{2} \sum \epsilon_i$
 $$
  V^{\Lambda}= \f  \qquad \qquad \Lambda=\frac{p}{2} \sum_{i=1}^n e_i\ .
 $$
 The parafermionic algebra of order $p=1$ coincides with the canonical fermionic Fock space, i.e., the HW representation  $\mathcal{V}(1)=V^\theta$ with $\theta=\frac{1}{2} \sum_{i=1}^n e_i$. 
The physical meaning of the order $p$ for the parafermionic algebra is the number of particles
that can occupyone and the  same state, that is, we deal with a  Pauli exclusion  principle
of order $p$. 
The  symmetric submodule $S^{p+1} \n_1 \subset \n_1^{\otimes p+1}$
 is spanned by the ``exclusion condition'' 
% $a_i^{\dagger p+1}=0$ 
 $(a_i^{\dagger })^{p+1}=0$
  and it generates an ideal $(S^{p+1} \n_1)$. 
  The parafermionic Fock space $\f$ %V^\Lambda$ 
is a Lowest  Weight module % of given $p$ 
 isomorphic to   the factor module  of $U \n$ % the Verma module $PS(V)$ 
by the ``exclusion'' ideal $(S^{p+1} \n_1)$
$$\f \cong%{\rm{Ind}}_\mathfrak{p}^\g \C\rvac/I(p)= PS/I(p) 
 U\n / (S^{p+1}\n_1)%\cong (S(V)\otimes S(\wedge^2 V))/(S^{p+1}V) % = PS(V)/(S^{p+1}V) 
 \ .
 $$
  On the other hand the parafermionic Fock space $\f=V^\Lambda$  is a HW $\g$-module with HW vector $| \Lambda \rangle
  $ %=(a_1^{\dagger })^p \dots (a_n^{\dagger})^p \rvac $
   (\ref{hwv})
 $$ V^{\Lambda} \cong
 U\n^\ast / (S^{p+1}\n_1^\ast) ) = \f \ .$$
 
\begin{theorem}[A.J. Bracken, H.S. Green\cite{BH}]
The HW $\mathfrak{so}_{2n+1}$-module $V^{\Lambda}\cong \f$ of HW vector $| \Lambda \rangle=| p \theta \rangle$
 splits into a sum of  irreducible $\mathfrak{gl}_n$-modules  $V^{\lambda}$
\beq
\label{branch}
V^{\Lambda}\downarrow^{\mathfrak{so}_{2n+1}}_{\mathfrak{gl}_n}
=\bigoplus_{\lambda: \lambda\subseteq (p^n)} V^{\lambda -(p/2)^n} 
\ , \qquad \qquad \Lambda =% p \theta=
\frac{p}{2} \sum_{i=1}^n e_i  \eeq
where the sum runs over all partitions which match inside the 
Young diagram $(p^n)$.% $(p,p,\ldots ,p)$.% $n\times p$. %, e.g.
%$\yng(p,p,\ldots ,p)  $
%\qquad n=5 \quad p=3 .
%$$
\end{theorem}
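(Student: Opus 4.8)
The plan is to work entirely on the creation side, using the isomorphism $\f\cong U\n/(S^{p+1}\n_1)$ of graded $\mathfrak{gl}_n$-modules established above, so that the branching reduces to decomposing the quotient algebra $U\n/(S^{p+1}V)$ into $\mathfrak{gl}_n$-irreducibles. First I would record the vacuum weight: since $E_i^i\rvac=-\frac{p}{2}\rvac$ while each $a_i^\dagger$ carries $\mathfrak{gl}_n$-weight $e_i$, the unit $1\in U\n$ sits at weight $-(p/2)^n$, which is the source of the uniform shift $-(p/2)^n$ in the statement. Hence it suffices to prove $U\n/(S^{p+1}V)\cong\bigoplus_{\lambda\subseteq(p^n)}V^{\lambda}$ as $\mathfrak{gl}_n$-modules, since every partition here has at most $n$ rows automatically, so that $\lambda\subseteq(p^n)$ is the single condition $\lambda_1\le p$.

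The first main step is to decompose $U\n$ itself. Because $\wedge^2V=\n_2=[\n_1,\n_1]$ is central in $\n$, the PBW theorem gives $U\n\cong S(\n)\cong S(V)\otimes S(\wedge^2V)$ as $\mathfrak{gl}_n$-modules, with $\Z$-grading $\deg V=1$, $\deg\wedge^2V=2$, and with $V^{\lambda}$ concentrated in degree $|\lambda|$. Its $\mathfrak{gl}_n$-character is therefore $\prod_i(1-x_i)^{-1}\prod_{i<j}(1-x_ix_j)^{-1}$, and by the classical Littlewood identity
$$\prod_{i}\frac{1}{1-x_i}\prod_{i<j}\frac{1}{1-x_ix_j}=\sum_{\lambda}s_{\lambda}(x_1,\dots,x_n),$$
the sum running over all partitions with at most $n$ rows, I obtain the multiplicity-free decomposition $U\n\cong\bigoplus_{\lambda}V^{\lambda}$. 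Consequently the two-sided ideal $(S^{p+1}V)$, being $\mathfrak{gl}_n$-stable, is a sub-sum $\bigoplus_{\lambda\in I}V^{\lambda}$, the quotient $\f$ is automatically multiplicity-free, and everything comes down to showing $I=\{\lambda:\lambda_1\ge p+1\}$.

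The heart of the argument is this identification of $I$. To show that each $V^{\lambda}$ with $\lambda_1\ge p+1$ is killed, I use that $\lambda$ then contains a first row of at least $p+1$ boxes, so $\lambda\supseteq(p+1)$; by the Pieri rule $V^{\lambda}$ is a constituent of $S^{p+1}V\otimes V^{\mu}$ for a suitable $\mu$ with $\lambda/\mu$ a horizontal $(p+1)$-strip. Realizing $V^{\mu}$ inside $U\n$ via the highest-weight vectors $y_k$ of the copies of $\wedge^kV$ (so $y_1=a_1^\dagger$, $y_2=[a_1^\dagger,a_2^\dagger]$, etc., and $\prod_k y_k^{\mu_k-\mu_{k+1}}$ is a highest-weight vector of weight $\mu$), I would apply the multiplication map $S^{p+1}V\otimes U\n\to U\n$ and check that the $V^{\lambda}$-constituent is actually produced rather than annihilated.

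This last verification is the main obstacle. Because the commutators $[a_i^\dagger,a_j^\dagger]$ are nonzero and central, the ideal $(S^{p+1}V)$ is genuinely larger than the naive span of monomials carrying a long symmetric factor: already $[a_1^\dagger,a_2^\dagger]^{p+1}$ enters it only through two-sided, sandwiched products such as $a_1^\dagger(a_2^\dagger)^{p+1}a_1^\dagger$, so a purely combinatorial leading-term bookkeeping does not suffice. The clean way to close the argument is by Euler--Poincar\'e/character counting: once $\bigoplus_{\lambda_1\ge p+1}V^{\lambda}\subseteq(S^{p+1}V)$ is established, so that the quotient is contained in $\bigoplus_{\lambda_1\le p}V^{\lambda}$, I compute $\dim\f=\dim V^{\Lambda}$ from the Weyl dimension formula for $\mathfrak{so}_{2n+1}$ at $\Lambda=p\theta$ and check it equals $\sum_{\lambda\subseteq(p^n)}\dim V^{\lambda}$. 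Since the multiplicity-free quotient then already attains the dimension of the full sum $\bigoplus_{\lambda_1\le p}V^{\lambda}$, no surviving component can drop out; hence $I=\{\lambda:\lambda_1\ge p+1\}$ exactly, and restoring the shift $-(p/2)^n$ yields the stated branching. This dimension/character matching is precisely the bookkeeping that the subsequent sections convert into Schur-polynomial identities.
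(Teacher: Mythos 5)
Your reduction to decomposing $U\n/(S^{p+1}V)$ is set up correctly: by PBW, $U\n\cong S(V)\otimes S(\wedge^2 V)$ as $\mathfrak{gl}_n$-modules, Littlewood's identity makes this decomposition multiplicity-free, so the exclusion ideal is indeed a sub-sum $\bigoplus_{\lambda\in I}V^{\lambda}$, and your accounting of the vacuum-weight shift $-(p/2)^n$ is right. But the argument does not close, for two reasons, one of which you concede yourself. First, the inclusion $\{\lambda:\lambda_1\geq p+1\}\subseteq I$ is the real content of this structural approach, and your Pieri-rule sketch for it is abandoned as the ``main obstacle'': the multiplication map $S^{p+1}V\otimes U\n\to U\n$ need not be injective on isotypic components, so exhibiting $V^{\lambda}$ as a Pieri constituent of a tensor product does not show it lies in the two-sided ideal; one needs an explicit nonvanishing statement (e.g.\ that a highest-weight vector of weight $\lambda$ actually lies in $(S^{p+1}V)$), and none is supplied. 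Without this inclusion your concluding count cannot start, since you no longer know the quotient is contained in $\bigoplus_{\lambda_1\leq p}V^{\lambda}$.

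Second, the dimension ``check'' you defer to --- that the Weyl dimension formula for $\mathfrak{so}_{2n+1}$ at $\Lambda=p\theta$ equals $\sum_{\lambda\subseteq(p^n)}\dim V^{\lambda}$ --- is not a routine verification: it is exactly the specialization $x_i\to 1$ of the determinantal expansion $D_{\rho+p\theta}/D_{\rho}=e^{p\theta}\sum_{\lambda:\,l(\lambda')\leq p}s_{\lambda}(x)$ (Macdonald, p.~84), and that expansion \emph{is} the paper's entire proof: the paper applies the Weyl character formula to $V^{\Lambda}$, quotes Macdonald's identity, and reads off the branching because Schur polynomials are the $\mathfrak{gl}_n$-characters. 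So even if you repaired the first gap, your route would still consume the same nontrivial symmetric-function input, merely in weakened (dimension-level) form; the difficulty has been relocated into an unperformed ``check,'' not avoided. To make your approach genuinely independent you would have to prove the ideal identification $I=\{\lambda:\lambda_1\geq p+1\}$ in both directions by algebra inside $U\n$, or else give an independent proof of the dimension (or character) identity.
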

\begin{proof} The Weyl character formula applied to a Schur module $V^{\lambda}$
%and Weyl group $W=S_n$
 yields
the  Schur polynomial
$$
s_{\lambda}(x_1, \ldots ,x_n)={\sum_{w\in W_1} \varepsilon(w) e^{w(\rho_1+ \lambda)}}/
\sum_{w\in W_1} \varepsilon(w) e^{w(\rho_1)} 
\qquad  W_1:= S_n \ ,
$$
where the variables are $x_i := \exp({-e_i})$
and the vector $\rho_1=\demi\sum_{i=1}^n(n-2i+1)e_i$. Alternatively the Schur polynomial is written as a quotient of determinants
\beq
\label{schurp}
s_{\lambda}(x_1, \ldots, n_n)= \frac{\det||x^{\rho_{1i}+\lambda_i}_j||}{
\det||x^{\rho_{1i}}_j ||} \ .
\eeq

The Weyl character formula applied to the ${\mathfrak{so}_{2n+1}}$-module $V^{\Lambda}$
reads %\footnote{The identification $x_i := \exp({e_i})$}
\beq
\label{charMac}
\chi^{\Lambda}= D_{\rho  + p \theta} / D_{\rho}=
e^{ p \theta }\sum_{\lambda:\, l(\lambda')\leq p} s_{\lambda}(x_1, \ldots ,x_n) \ , \qquad e^{ p \theta }=(x_1 \ldots x_n)^{-\frac{p}{2}}
 %\quad \Lambda =  p\theta %=\demi \sum e_i
\eeq
where   $W=S_n\ltimes \Z^n_2$ is the Weyl group of the root system of Dynkin type $B_n$ and $D_{\rho}= \sum_{w\in W} \varepsilon(w) e^{w\rho}$ with 
$\rho=\demi\sum_{i=1}^n(2n-2i+1)e_i$. The  quotient of determinants $D_{\rho  + p \theta} / D_{\rho}$ can be further expanded 
as a sum over the Schur polynomials with no more than $p$ columns 
% can be represented as 
(see p.84 in the book of Macdonald \cite{Macdonald}).
Here $\lambda'$ stands for the partition conjugated to $\lambda$
and $l(\mu)$ is the length of the partition $\mu$.
The Schur polynomials $s_{\lambda}(x)$ are characters of the $\mathfrak{gl}_n$-modules
thus the expansion of the ${\mathfrak{so}_{2n+1}}$-character $\chi^{\Lambda}$ implies
the branching formula (\ref{branch}). We are done. \qed
\end{proof}

\section{Kostant's theorem and the cohomology $H^{\bullet}(\n, \f)$}

The Kostant theorem is a powerfull tool helping to calculate 
 cohomologies.
Let's have a semi-simple algebra $\g$ and its Borel subalgebra $\mathfrak b= \h \oplus \bigoplus_{\alpha\in \Delta_+} \g_\alpha \ .$
Any parabolic subalgebra $\p$,  $\g \supset \p \supseteq \mathfrak b$ 
has a Levi decomposition $\p = \g_1 \ltimes \n$ where
$\g_1$ is a reductive algebra and $\n$ is the nilradical (largest nilpotent ideal) of $\p$. Consider the $\g$-module $V^\Lambda$ of weight $\Lambda$  and the cohomology
$H^{\bullet}(\n, V^\Lambda)$ with coefficients
in the restriction $\n$-module $V^\Lambda\downarrow_{\n}^\g$.
The Kostant's theorem gives the decomposition of 
$H^{\bullet}(\n, V^\Lambda)$
as a sum of irreducibles $\g_1$-modules $V^{\mu}$.

\begin{theorem}(Kostant)
Let $W$ be the Weyl group of the algebra $\g$ and
the subset  $\Phi_{\sigma}\subseteq \Delta_+$ be
$$\Phi_{\sigma} :=\sigma  \Delta_- \cap\Delta_+ \subseteq  \Delta_+\ .$$
Let $\rho$ be the Weyl vector $\rho=\demi \sum_{\alpha \in \Delta_+} \alpha $.
  The roots of the nilpotent radical $\n$ are denoted as  $\Delta(\n)$ and
the subset $W^1=\{\sigma\in W| \Phi_\sigma\subset \Delta(\n)\}$
 is a cross section of the coset $W_1\backslash W$.
The cohomology $H^{\bullet}(\n, V^\Lambda)$ has a decomposition into irreducible $\g_1$-modules $V^{\mu}$
\[ 
H^{\bullet}(\n, V^\Lambda)= \bigoplus_{\sigma \in W^1}  V^{\sigma(\rho +\Lambda) - \rho}
\]
where 
 the cohomological degree of $H^{j}(\n)$ is  the number of the elements $j:=\#\Phi_\sigma$.
\end{theorem}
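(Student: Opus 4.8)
The plan is to realize the cohomology through Hodge theory on the Chevalley--Eilenberg complex and then to reduce the entire computation to an eigenvalue problem for a Laplacian expressed in terms of Casimir operators, following Kostant \cite{Kostant}. First I would form the cochain complex $C^j=\wedge^j\n^*\otimes V^\Lambda$ equipped with the Chevalley--Eilenberg differential $d\colon C^j\to C^{j+1}$, whose cohomology is by definition $H^\bullet(\n,V^\Lambda)$. Since the Levi decomposition $\p=\g_1\ltimes\n$ means that the reductive factor $\g_1$ normalizes $\n$ and preserves $V^\Lambda$, the algebra $\g_1$ acts on each $C^j$ and commutes with $d$. Consequently every $H^j(\n,V^\Lambda)$ is automatically a $\g_1$-module, and the only remaining task is to identify which irreducibles $V^\mu$ occur and in which degree.

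Next I would introduce the contravariant Hermitian inner product on $V^\Lambda$ coming from the compact real form, and combine it with the inner product on $\wedge^\bullet\n^*$ induced by the Killing form, producing a positive-definite $\g_1$-invariant inner product on $C$. With respect to it I would compute the formal adjoint (codifferential) $d^*$ and set $L=dd^*+d^*d$. The standard Hodge argument then yields an orthogonal decomposition $C=\ker L\oplus\operatorname{im}d\oplus\operatorname{im}d^*$ together with a $\g_1$-module isomorphism $H^\bullet(\n,V^\Lambda)\cong\ker L$, so that it suffices to determine the harmonic cochains.

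The heart of the argument---and the step I expect to be hardest---is the explicit evaluation of $L$. Writing $d$ and $d^*$ in a Cartan--Weyl basis and reorganizing the resulting double sums, one shows that $L$ acts on the $\g_1$-isotypic component of $C$ of highest weight $\mu$ as the scalar
\[ 2L=\|\Lambda+\rho\|^2-\|\mu+\rho\|^2, \]
the norms being taken in $\h^*$ via the Killing form and $\rho$ being the Weyl vector of the \emph{full} algebra $\g$. This is precisely where the bookkeeping of the structure constants and the interplay between the Casimir of $\g$ and that of $\g_1$ must be controlled; the eigenvalue of the $\g$-Casimir on $V^\Lambda$ is $\|\Lambda+\rho\|^2-\|\rho\|^2$, and the subtraction of the $\g_1$-contribution is what produces the displayed difference. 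Granting this formula, a cochain is harmonic exactly when $\|\mu+\rho\|=\|\Lambda+\rho\|$; since $\mu$ and $\Lambda$ are integral, this forces $\mu+\rho=\sigma(\Lambda+\rho)$ for some $\sigma\in W$, whence $\mu=\sigma(\Lambda+\rho)-\rho$.

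Finally I would match these harmonic weights with the index set $W^1$. For $\mu$ to be the highest weight of an irreducible $\g_1$-submodule it must be $\g_1$-dominant, and this condition singles out from each coset $W_1\sigma$ exactly the minimal representative, namely the unique $\sigma\in W^1$ with $\Phi_\sigma\subset\Delta(\n)$. This gives the decomposition $H^\bullet(\n,V^\Lambda)=\bigoplus_{\sigma\in W^1}V^{\sigma(\rho+\Lambda)-\rho}$. The cohomological degree is then pinned down by recording the $\wedge^j\n^*$ factor: the highest weight $\sigma(\rho+\Lambda)-\rho$ is realized by pairing the extremal vector of $V^\Lambda$ with the exterior monomial over the roots in $\Phi_\sigma$, whose contribution to the weight is $-\sum_{\alpha\in\Phi_\sigma}\alpha$, so the corresponding harmonic class lies in degree $j=\#\Phi_\sigma$, completing the statement.
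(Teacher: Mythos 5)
The paper itself contains no proof of this statement: Kostant's theorem is quoted from \cite{Kostant} and then applied, so your proposal can only be measured against Kostant's original argument. That argument is indeed the Hodge-theoretic strategy you outline: the Chevalley--Eilenberg complex $C^j=\wedge^j\n^*\otimes V^\Lambda$, a $\g_1$-invariant positive-definite inner product, the Laplacian $L=dd^*+d^*d$, the scalar action of $L$ on the $\g_1$-isotypic component of highest weight $\mu$ by $\frac{1}{2}\bigl(\|\Lambda+\rho\|^2-\|\mu+\rho\|^2\bigr)$, and the identification of cohomology with harmonic cochains. Up to that point your outline follows the standard proof correctly.

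The genuine gap is the step where you pass from the norm equality to the Weyl orbit. The claim that ``since $\mu$ and $\Lambda$ are integral, $\|\mu+\rho\|=\|\Lambda+\rho\|$ forces $\mu+\rho=\sigma(\Lambda+\rho)$ for some $\sigma\in W$'' is false: integrality, even together with $\mu$ lying in $\Lambda$ minus the root lattice (which is all one gets from $\mu$ being a weight of $\wedge^\bullet\n^*\otimes V^\Lambda$), does not place two weights of equal norm in one Weyl orbit. For example, in type $A_2$ take $\Lambda=3\omega_1$ and $\mu=3\omega_2$: then $\Lambda+\rho=4\omega_1+\omega_2$ and $\mu+\rho=\omega_1+4\omega_2$ have equal norm, differ by the root $\alpha_2-\alpha_1$, and are both strictly dominant, hence cannot be $W$-conjugate since each orbit contains a unique dominant element. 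What is actually needed here is Kostant's key combinatorial lemma: every weight of $C^j$ has the form $\xi-\langle\Phi\rangle$ with $\xi$ a weight of $V^\Lambda$ and $\Phi\subseteq\Delta(\n)$, $\#\Phi=j$, where $\langle\Phi\rangle=\sum_{\alpha\in\Phi}\alpha$; one then proves $\|\xi-\langle\Phi\rangle+\rho\|\leq\|\Lambda+\rho\|$ with equality if and only if $\xi=\sigma\Lambda$ and $\Phi=\Phi_\sigma$ for a unique $\sigma\in W^1$. This uses the geometry of the weight polytope and the equality analysis for $\rho-\langle\Phi\rangle$ relative to the orbit $W\rho$, not integrality. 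The same lemma is also what your argument is missing for the multiplicity statement: even granting the orbit claim, your Hodge argument shows only that every harmonic highest weight is of the form $\sigma(\rho+\Lambda)-\rho$; it does not show that each such module actually occurs in the cohomology, nor that it occurs exactly once and only in degree $\#\Phi_\sigma$. Both facts follow from the multiplicity-one part of the lemma (the weight $\sigma(\Lambda+\rho)-\rho$ occurs in the whole complex exactly once, on the $\sigma\Lambda$-weight vector tensored with the exterior monomial over $\Phi_\sigma$), which your final paragraph gestures at but does not establish. As written, your proof would allow some $V^{\sigma(\rho+\Lambda)-\rho}$ to drop out of the cohomology or to appear with the wrong multiplicity.
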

 
% \subsection{The cohomology $H^{\bullet}(\n, \f)$}
 J. Grassberger, A. King and P. Tirao \cite{tirao} applied Kostant's theorem to cohomology  $H^{\bullet}(\n,\C)$ with trivial coefficients. 
 Here we extend their method for cohomologies with coefficients in  the parafer\-mio\-nic Fock space $\f$, $H^{\bullet}(\n,\f)$.
 
  \begin{theorem}
  \label{c1}
  Let $\n$ be the free 2-step nilpotent Lie algebra
$\n= V\oplus \wedge^2 V$ and  $V^{\Lambda}$ be the
parafermionic Fock space, $V^{\Lambda}=\f$ .
  The cohomology $H^{\bullet}(\n, V^{\Lambda})$ with values in the
$\n$-module $V^{\Lambda}\downarrow^\g_\n$ has a decomposition into irreducible $\mathfrak{gl}(V)$-modules
 \beq
 \label{Hkp}
H^{k}(\n, \f) \cong \bigoplus_{\mu:\mu=\mu'} V^  {\ast \mu^{(p)}-(\frac{p}{2})^n} ,
\qquad \qquad 
k = \frac{1}{2}(|\mu| + r(\mu)) \ ,
\eeq
where the sum is over self-conjugated Young diagrams $\mu=(\alpha|\alpha)$
  and the notation $\mu^{(p)}$ stays for the $p$-augmented diagram 
  $\mu^{(p)}=(\alpha+p |\alpha)$.
  \end{theorem}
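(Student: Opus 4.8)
The plan is to turn the abstract Kostant sum into an explicit list. Since $\g=\mathfrak{so}_{2n+1}$ and $\p=\mathfrak{gl}_n\ltimes\n$ is the parabolic whose nilradical is $\n=V\oplus\wedge^2V$, the roots of $\n$ are $\Delta(\n)=\{e_i\}_{i}\cup\{e_i+e_j\}_{i<j}$, while the Levi $\g_1=\mathfrak{gl}_n$ carries the roots $\{e_i-e_j\}$. With $\Lambda=\frac p2\sum_i e_i$ and $\rho=\demi\sum_i(2n-2i+1)e_i$, the vector $\rho+\Lambda$ has coordinates $c_j=(n-j)+\frac{p+1}{2}$, i.e. the strictly decreasing family $\frac{p+1}{2}+\{0,1,\dots,n-1\}$. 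The whole task is then to identify, for each $\sigma$ in the cross-section $W^1$, the dominant weight $\sigma(\rho+\Lambda)-\rho$ and the cohomological degree $\#\Phi_\sigma=\ell(\sigma)$.

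First I would make $W^1$ explicit. Here $W=S_n\ltimes\Z_2^n$ and $W_1=S_n$, so $W^1$ is a set of $2^n$ minimal-length coset representatives. Because $\sigma(\rho+\Lambda)-\rho$ is forced to be $\mathfrak{gl}_n$-dominant, $\sigma(\rho+\Lambda)$ must be the strictly decreasing rearrangement of a signed family $\{\varepsilon_j c_j\}$; hence $\sigma$ is determined by the subset $B=\{j:\varepsilon_j=-1\}$ of negated coordinates, and every subset occurs exactly once (the sorted sequence is automatically the dominant, hence minimal-length, representative). I would then set up the bijection $B\leftrightarrow\mu$ sending $B=\{j_1<\dots<j_r\}$ to the strict arm sequence $\alpha_s:=n-j_s$ and thus to the self-conjugate diagram $\mu=(\alpha\mid\alpha)$ of Frobenius rank $r(\mu)=r=|B|$; the constraint $0\le\alpha_s\le n-1$ says exactly that $\mu$ fits in the $n\times n$ square, and the count agrees with $\#W^1=2^n$.

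Next I would compute the two outputs for $\sigma=\sigma_B$. Writing $\sigma_B(\rho+\Lambda)$ as the sorted signed sequence and subtracting $\rho$ gives, with $A=\{k_1<\dots\}$ the complement of $B$, the value $\lambda_i+\frac p2=(i-k_i)+p$ at the unnegated positions and $\lambda_i+\frac p2=i+j-2n-1$ at the negated ones; a direct check shows these assemble into $*\mu^{(p)}-(\tfrac p2)^n$, where $\mu^{(p)}=(\alpha+p\mid\alpha)$ is the $p$-augmented diagram and $*$ is the duality operation of the statement (contragredient together with the determinantal shift by $(p)^n$). For the degree I would use $\#\Phi_{\sigma_B}=\ell(\sigma_B)=\#\{\gamma\in\Delta_+:\sigma_B\gamma\in\Delta_-\}$ and count the reversed roots: the contributions come from $e_j$ with $j\in B$, from $e_i+e_j$ with $i<j$ both in $B$, and from $e_j-e_k$ with $j\in B$, $k\in A$, $j<k$, so that after the telescoping $\sum_{j\in B}\#\{k\in A:k>j\}=\sum_s(n-j_s)-\binom r2$ one gets
\[
\#\Phi_{\sigma_B}=\sum_{j\in B}(n-j+1)=\sum_{s=1}^{r}(\alpha_s+1).
\]
Since $|\mu|=\sum_s(2\alpha_s+1)=2\sum_s\alpha_s+r$ for a self-conjugate diagram, this equals $\frac12(|\mu|+r(\mu))$, the asserted degree.

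I expect the main obstacle to be the bookkeeping in these two combinatorial dictionaries rather than anything conceptual. One must be careful to take the \emph{minimal-length} coset representative: a naive sign flip need not lie in $W^1$, since the inversion set of $\sigma^{-1}$ can meet the Levi roots $\{e_i-e_j\}$, and only the dominant (sorted) representative is admissible. One must likewise translate cleanly between the signed sequence, the Frobenius data $(\alpha\mid\alpha)$ and the augmented weight $*\mu^{(p)}$, and the length count is the delicate point because it requires sorting the three families of roots case by case. As a consistency check, at $p=0$ the coefficients become trivial, $\mu^{(0)}=\mu$, and the formula recovers the known decomposition of $H^{\bullet}(\n,\C)$ into self-conjugate Schur functors.
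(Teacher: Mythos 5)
Your proposal is correct and follows essentially the same route as the paper: apply Kostant's theorem to the parabolic $\p=\mathfrak{gl}_n\ltimes\n$ in $\g=\mathfrak{so}_{2n+1}$, parametrize the cross-section $W^1$ by the $2^n$ sign-patterns (subsets $B\subseteq\{1,\dots,n\}$), identify the dominant weights $\sigma(\rho+\Lambda)-\rho$ with duals of $p$-augmented self-conjugate diagrams $(\alpha+p\,|\,\alpha)$, $\alpha_s=n-j_s$, and count $\#\Phi_\sigma=\sum_{j\in B}(n-j+1)=\frac12(|\mu|+r(\mu))$. The only cosmetic differences are that you characterize the $W^1$-representative by dominance (sorting) where the paper constructs the permutation $\omega_I$ explicitly as a product of cycles, and you count the full inversion set (including Levi roots, removed by telescoping) where the paper uses $S_n$-stability of $\Delta(\n)$ to count only sign-flipped roots of $\n$; both yield identical formulas.
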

  We recall the Frobenius notation  for a Young diagram $\eta$
%$$\eta:=\left(\ba{ccc} \alpha_1 , \ldots, \alpha_r \\
%\beta_1 ,  \ldots, \beta_r \ea\right)
% $$ 
 $$\eta:=( \alpha_1 , \ldots, \alpha_r |
\beta_1 ,  \ldots, \beta_r ) \qquad r= r(\eta)
 $$ 
 where the {\it rank} $r(\eta)$ is the number of boxes on the diagonal of $\eta$,
the arm-length  $\alpha_i$ is the number of boxes
on the right of the $i$th diagonal box, and the leg-length $\beta_i$ is the number
of boxes below the $i$th diagonal box. 
The overall number of boxes  in   $\eta$ is
$
|\eta| = r+  \sum_{i=1}^{r} \alpha_i + \sum_{i=1}^r \beta_i  \ .
$
The conjugated diagram
$\eta'$ is the diagram in which the arms and legs are exchanged
%$$ \eta':=\left(\ba{ccc} 
%\beta_1 ,  \ldots, \beta_r \\ \alpha_1 , \ldots, \alpha_r  \ea\right) \ .$$
$$ \eta':=(
\beta_1 ,  \ldots, \beta_r | \alpha_1 , \ldots, \alpha_r ) \ .$$
%It is worth noting that 
% the set of self-conjugated Young diagrams  $%\Delta^{\rho}=
% \{\lambda| \lambda=\lambda' \}$
%is singled out by the condition  $\beta_i=\alpha_i$ in Frobenius notation.
\begin{proof}
The parafermionic algebra $\g\cong\mathfrak{so}_{2n+1}$ has Cartan decomposition (\ref{bn}). Consider its parabolic subalgebra 
$
\p = \bigoplus_{i>j} \g_{e_i-e_j} \oplus \h \oplus \bigoplus_{\alpha\in \Delta_{+}} \g_{\alpha} \subset \g
$.
From the parafermionic relations (\ref{1}) is readily seen that
the Levi decomposition of the parabolic subalgebra $\p=\g_1 \ltimes \n$ 
has reductive component 
\beqa
\g_1= \h \oplus \bigoplus_{i\neq j} \g_{e_i-e_j} \ \cong \mathfrak{gl}_n 
\eeqa
acting by automorphisms on the free 2-step nilpotent algebra $\n$ (the space $\n_1=V$ being the fundamental representation of $\g_1=\mathfrak{gl}_n$) 
\beqa
\n&=&\bigoplus_{i}\g_{e_i} \oplus \bigoplus_{i< j} \g_{e_i+e_j} \cong V \oplus
\wedge^2 V \ .
\eeqa

The Weyl group $W_1$ of $\g_1=\mathfrak{gl}_n$ is the symmetric group $S_n$
operating on $\{e_1, \ldots ,e_n\}$ by permutations.
The 
Weyl group of $\g=\mathfrak{so}_{2n+1}$ is $W=S_n
\ltimes \Z^n_2$. The  $\Z^n_2$ 
is generated by operators $\tau_i$, $i=1,\ldots,n$ such that $\tau_i^2=1$ acting by
$$
\tau_i (e_j) =
\left\{ \ba{rcr} -e_j && i=j \\
e_j && i\neq j  \ea \right. \ .
$$
The elements $\tau_I\in \Z^n_2$  are indexed by subsets $I\subseteq \{1, \ldots, n\}$, $\tau_I \in \prod_{i\in I} \tau_i$.

Let us describe the subset $W^1$ which has order $|W^1|=2^n$.
Both $W^1$ and $\Z^n_2$ are cross sections of  $W_1\backslash W$
thus for each $\tau_I\in \Z_2^n$ exists a unique  permutation $\omega_I \in S_n$
such that $\omega_I \tau_I \in W^1$.

Let $\mathfrak b^{0}$ be the nilpotent part of the Borel algebra
$\mathfrak b^{0}=\mathfrak b \slash \h$ and and the complement be
$\mathfrak m_1=\g_1 \cap \mathfrak b^{0}=\mathfrak b^{0} / \n$.
The subset $W^1=\{ \sigma\in W| \Phi_\sigma \subseteq \Delta(\n) \}$
keeps stable also the complement of $\Delta(\n)$ 
$$
\sigma\Delta(\n)\subseteq \Delta_+ \qquad \Leftrightarrow \qquad
\sigma^{-1} \Delta(\mathfrak b^{0} / \n)\subseteq \Delta_+ \ .
$$
The root system of $\mathfrak m_1$ is  $\Delta(\mathfrak m_1)=\{e_i- e_j, i<j \}$
therefore $\omega_I \tau_I \in W^1$ implies $\tau^{-1}_I\omega^{-1}_I\Delta(\mathfrak m_1)\subseteq \Delta_+$ or
$ \tau_I\omega^{-1}_I (e_i- e_j)>0 $ for  $ i<j \ .$
These inequalies are satisfied for   $\omega_I\in S_n$ defined by 
$$\omega_I(a)>\omega_I(b) \quad \mbox{when} \quad \left\{ \ba{cccc} a<b& & a\in I & b\in I \\ a> b &&a\notin I &b\notin I\\ && a\in I& b\notin I\ea \right. \ .$$
The permutation places all elements of $I=\{i_1, \ldots i_r\}$  after all the elements of its complement
$\bar{I}$ preserving the order of $\bar{I}$ and reversing the order of 
$I$, that is,
\beq\label{oI}
\omega_I(1, \ldots , i_1,  \ldots ,  i_r, \ldots, n)=
(1, \ldots , \hat{i}_1,  \ldots ,  \hat{i}_r, \ldots, n, i_r, \ldots, i_2,i_1) \ . 
\eeq
The permutation $\omega_I$ can be  represented as  a product of 
cyclic permutations 
%$$\omega_I= \zeta_{n-i_k +k} \ldots \zeta_{{n-i_2+2}}\zeta_{n-i_1+1}$$
$\omega_I= \zeta_{i_r} \ldots \zeta_{{i_2}}\zeta_{i_1}$
where  $\zeta_{i_k}$ is the  cycle (of length ${n-i_k+1}$) from
positions $i_k-k+1$ to $n-k +1$.
Therefore the action of $\omega_I$ is represented by the sequence of steps
\beqa \zeta_{i_1 } (1, \ldots , i_1, \ldots, i_k,\dots n) &=&
 (1, \ldots ,\hat{i}_1, i_1+1,  \ldots,n, i_1) , \nonumber \\
 \zeta_{i_2 }(1,  \dots ,\! \!\!\!\!\! \!\underbrace{ i_2}_{\mbox {place } i_2-1 } \!\!\!\!\! , \ldots,n, i_1)&=&
 (1,  \dots ,\hat{i}_2,  \ldots,n,i_2, i_1), \nonumber\\
& \ldots& \nonumber \\
 \zeta_{i_k }(1,  \dots ,\! \!\!\!\!\! \!\underbrace{ i_k}_{\mbox {place } i_k-k+1 } \!\!\!\!\! , \ldots,n,i_{k-1},\ldots, i_1)&=&
 (1,  \dots ,\hat{i}_k,  \ldots,n,i_k, \ldots,  i_1) \ . \nonumber
\eeqa
Note that after the $j$-th step, the last $j$ places are not touched by
the next cyclings.

The Weyl vector $\rho$ associated to $\g=\mathfrak{so}_{2n+1}$ reads $\rho=\demi\sum_{i=1}^n(2n-2i+1)e_i$. Note that the components of $\rho$
are strictly decreasing with step $1=\rho_{i+1} - \rho_{i}$.
%The Kostant's theorem states that 
The cohomology ring $H^{\bullet}(\n,V^{\Lambda})$ decomposes into
 $\mathfrak{gl}(V)$-modules with HW weights $\sigma(\rho+ \Lambda) - \rho$ for $\sigma\in W^1$.
 We are interested in the case  $\Lambda =\frac{p}{2}\sum e_i $,
 $V^{\Lambda}=\f$.

Consider first the case $p=0$, i.e., the cohomology with trivial coefficients $H^{\bullet}(\n, \C)$ following \cite{tirao}.
The highest weights $\lambda_I= \sigma(\rho)- \rho$ for $\sigma\in W^1$ 
 are non-positive   due to
 $\sigma(\rho)_i\leq \rho_i$. 
 The cycling structure of $\omega_I$ implies
%  the following form of   $\lambda_I=\sum \lambda_j e_j$
 $$
 \lambda_I=\sum \lambda_j e_j, \qquad 
 \lambda_j ={- (n - i_{n-j+1} +1)}%_{-\rho_{i_k-k+1} - \rho_{n-k+1}}
  \chi_{(n-r+1\leq j \leq n)}- \sum_{k=1}^{r} \chi_{(i_k-k+1\leq j \leq n-k)} \ .
 $$
 One has an  isomorphism between a HW $\mathfrak{gl}_n$-module $V^{\lambda_I}$ with negative weight $\lambda_I\leq 0$ and the dual representation $V^{\ast \mu_I}$
 with reflected weight $\mu_I\geq 0$% \cite{FH}
$$
V^{\lambda_I} \cong V^{\ast \mu_I} \qquad \qquad \mu_I := \sum_{i=1}^n \mu_i e_i = -\sum_{i=1}^n  \lambda_{n-i+1} e_i \geq 0 \ . % \qquad \mu_1\geq \ldots\geq  \mu_n\geq 0  \ .
$$ 
The components of $\mu_I$ are decreasing positive integers
$\mu_1\geq \ldots\geq  \mu_n\geq 0$
\beq
\label{dom}
 \mu_j = (n - i_{j} +1) \chi_{(1\leq j \leq r)}+ \sum_{k=1}^{r} \chi_{(k+1\leq j \leq n- i_k +k)}  \ ,
\eeq
 and  these components  code a self-conjugated Young diagram $\mu^{\prime}_I=\mu_I$
 $$
 \mu_I= (\alpha_I|\alpha_I) \qquad \alpha_I=(\alpha_1, \ldots, \alpha_r), \quad
 \mbox{for} \quad
 \alpha_j=n-i_j \ .$$
Roughly speaking  the $j$-th cyclic permutation $\zeta_{i_k}$ in $\omega_I$ 
creates a self-conjugated  hook subdiagram of $\mu_I$ 
with  $\alpha_j =n-i_j$.

By virtue of the Kostant's theorem \cite{Kostant} the cohomology  $H^\bullet(\n, \C)$ of the  nilpotent Lie algebra  $\n$ 
has decomposition into Schur modules with HW vector $| \mu_I\rangle$
 $$H^\bullet(\n, \C)=\bigoplus_{\mu_I: \mu'_I=\mu_I}%\subseteq (n)^n} 
 V^{\ast \mu_I},
 \qquad  %\Omega_n= \qquad
  | \mu_I\rangle =E^{-\Phi_{\sigma}}, \quad \sigma\in W^1$$
 labelled by self-conjugated Young  diagrams. All self-conjugated Young  diagrams $\{\mu_I: \mu'_I=\mu_I\}$ are in bijection with elements of  $W^1$ (with cardinality $|W^1| = 2^n$), all these diagrams  are included into
the maximal square diagram, $\mu_I \subseteq (n^n)$.

Consider now the cohomology ring $H^{\bullet}(\n, V^{\Lambda})$ where $\Lambda =\frac{p}{2}\sum e_i$. It decomposes into
 $\mathfrak{gl}_n$-modules with  HW weights $\lambda_I^{(p)}=\sigma(\rho+ \Lambda) - \rho$ where $\sigma=\omega_I \tau_I \in W^1$.
% is again given by eq. (\ref{oI}).
Given a set  $ I=\{i_1, \ldots , i_r\}$  the shift $\Lambda$ modifies 
  the dominant weight $\nu_I= \sum \nu_i e_i$  to % as follows
  \[
\nu_j^{(p)}= - \lambda_{n-j+1}^{(p)} , \qquad  \nu_j^{(p)} =-\frac{p}{2}+ (n - i_{j} +1+p) \chi_{(1\leq j \leq r)}+ \sum_{k=1}^{r} \chi_{(k+1\leq j \leq n- i_k +k)}  \ .
\]
The weights $\nu_I^{(p)}=  \mu_I^{(p)}- \frac{p}{2}\sum  e_i$ fix  the HW vectors in  the $\mathfrak{gl}_n$-modules $V^{\ast\nu_I^{(p)}}$ %$V^{\ast \nu_I}:= V^{\ast \mu_I^{(p)}}$
$$V^{\ast\nu_I^{(p)}} =V^{\ast \mu^{(p)}_I} \otimes | \Lambda \rangle\quad \mbox{where} 
\qquad \mu^{(p)}_I=(
\alpha_I +p |\alpha_I) \qquad \alpha_j=n-i_j $$
from where the decomposition of $H^{\bullet}(\n, \f)$ (\ref{Hkp}) follows,
the sum over $\sigma \in W^1$ in Kostant's theorem being replaced by the  sum over self-conjugated Young diagrams $\mu=\mu'$.
%where  $\mu^{(p)}$ is one of the diagrams $\mu^{(p)}_I=(
%\beta|\alpha)$  with arm-lengths $\beta_j(\mu_I^{(p)})=n-i_j+p$
%and leg-lengths $\alpha_j(\mu_I^{(p)})=n-i_j$, for $j=1, \ldots, r$.
The arm $p$-augmented diagram 
$\mu^{(p)}_I$ stems  from   the self-conjugated  diagram $\mu_I=(\alpha_I|\alpha_I)$ cf. eq.  (\ref{dom}) 
by augmenting the arm-lengths, $\mu^{(p)}_I=(
\alpha_I +p |\alpha_I)$. %(see eq. (\ref{paug})).

The cohomological degree $k$ of the elements in 
$V^{\ast \mu_I^{(p)}}\otimes \rvac
\subset H^{k}(\n, \f) $ do not depend on $p$
%(which is the same as in $V^{\ast \mu_I}$)
but only on
 %the cardinality of the set  $\# \Phi_{\sigma}$. %, $\sigma \in W^1$.
% Let
  $\sigma=\omega_I \tau_I \in W^1$ (or equivalently on $\mu_I$).
  In view of $
\Phi_{\sigma} = 
  \Delta_- \cap\sigma^{-1}\Delta_+ $
a root $\xi \in \Phi_{\sigma}\subseteq \Delta(\n) $ whenever $\sigma^{-1} \xi< 0$.
  But the set $\Delta(\n)$ is stable under permutations and  
  $\tau_I^{-1}= \tau_I$ thus
  \beqa
  \#\Phi_\sigma &=& \# \{ \xi \in \Delta(\n), \tau_I \xi <0 \} \nonumber \\
  &=&\#\{  \g_{e_i}, i\in I \} + \# \{ \g_{e_i+e_j}: i<j, i \in I\}\nonumber\\
  &=& \sum_{i\in I} (1 +n - i)= r +   \sum_{k=1}^r (n - i_k)
  = r + s=\deg{ \mu_I}%   \sum_{k=1}^r \alpha_{k}
  \nonumber \ .
  \eeqa
  Thus the cohomological degree $k=\deg  \mu_I%=\deg (|\mu_I\rangle)
  =\#\Phi_\sigma$ is the total degree $k=(r+s)$ of the bi-complex 
  $ \wedge^s (\wedge^2 V^{\ast}) \otimes \wedge^s V^{\ast}$.
%  $$
%  k=\deg{ \mu_I}= r+ s 
%  $$
  The number of boxes above the diagonal in $\mu_I$ is
  $s= \demi (|\mu_I|- r)$ so finally one gets
  $
 k= \deg { \mu_I} = \demi ( r(\mu_I)+ |\mu_I|) \ .
  $
%  In particular,
%  the degree % $k_{max}=N$ 
%  of the maximal square diagram $\Omega_n=(n^n)$ is
%   $\deg \Omega_n=n+ \frac{n(n-1)}{2}=\frac{n(n+1)}{2}  \ .$ 
We are done.   \qed
   \end{proof}

% \subsection{The cohomology $H^{\bullet}(\n, \C)$}
%
%
%From our prespective the cohomology   $H^{\bullet}(\n, \C)$ with trivial coefficients is  the special case of parafermionic Fock space with $p=0$ when $\f=\C$. So we start by review of this simpler situation.
%\begin{thm} Let $\n$ be the free 2-step nilpotent algebra
%$\n= V\oplus \wedge^2 V$.
% 
% \noindent
% We got the cohomology $H^{\bullet}(\n, \C)$ but also the homology $H_{\bullet}(\n, \C)$ thanks to the isomorphism (\ref{iso}).
%  The homology $H_{\bullet}(\n, \C)$ was first calculated by J\'osefiak and Weyman \cite{JW} and then independently using different method  by Sigg \cite{Sigg}.

\section{Resolution of $\f$}
A general result of  Henri Cartan \cite{Cartan} states that  every
positively graded $\mathcal A$-module $M$ of a  graded algebra $\mathcal A=\oplus_{n\geq 0} {\mathcal A}_n$ allows for a minimal projective resolution by projective $\mathcal A$-modules. Moreover the notions of a projective and a free module coincide in the graded category. Thus for every positively graded $\mathcal A$-module $M$ there exists a minimal  resolution by free 
$\mathcal A$-modules.

The universal enveloping  algebra $U \n$ is a graded associative algebra and the parafermionic Fock space $\f=V^{\Lambda}$ is a positively graded $U \n$-module. There exists \cite{Cartan} a minimal free resolution $P_\bullet=\bigoplus_{k=0}^{N} P_k$ of the right $U\n$-module $\f^{\ast}$
\beq
\label{freeres} 
0\rightarrow P_N\rightarrow \ldots \rightarrow P_{1} \rightarrow P_{0} \stackrel{%1\otimes \epsilon
}{\rightarrow} \f^{\ast} \rightarrow 0
\eeq
by  free right $U \n$-modules $P_k=    E_k\otimes U \n $. 
%Here  the map $\epsilon$ is the augmentation  of $U \n$. 
 We apply the functor 
$- \otimes_{U\n} \C$ on the complex $P_{\bullet}$, where $\C$ is the trivial $U\n$-module.
The minimality of the resolution  $P_{\bullet}$ implies \cite{Cartan} that the differentials of
the  complex $P_\bullet\otimes_{U \n} \C$ vanish.
Hence the multiplicity spaces $E_k$ coincide with the homologies% of $\n$
%with coefficients in the right module $\f^{\ast}$
 $$E_k\cong {\rm{Tor}}^{U\n}_{k}(\f^{\ast},\C) =H_k(\n,\f^{\ast})  
\qquad \Rightarrow \qquad E_k^\ast \cong H^k(\n, \f) \ ,$$
%The homology and cohomology of $\n$ being related by an
where we used the isomorphism %(\ref{cartaniso})
$H_k(\n, M)^{\ast}=H^{k}(\n, M^{\ast})$.
%we find %the vector space isomorphism 
%$$ E_k^\ast \cong H^k(\n, \f) \ .$$
Theorem \ref{c1} gives us the spaces $E_k \cong H^k(\n, \f)^\ast$ so
  we have constructed the minimal free resolution (\ref{freeres}).

\begin{theorem} The Euler-Poincare characteristic of
the free minimal resolution of the (dual of the)
 parafermionic Fock space $\f$ (\ref{freeres})
yields  the identity 
 %parafermionic sign-alternating identity %(\ref{conj})
 \beq
\label{conj}
\frac{\sum_{\mu: \mu=\mu'}  (-1)^{\demi(|\mu|+r(\mu)) }s_{\mu^{(p)}}(x)}{{\prod_{i} {(1-x_{i})} 
\prod_{i<j}{(1-x_{i}x_{j}} ) }}
 =\sum_{\lambda: l(\lambda')\leq p} s_\lambda(x) \ .
\eeq
\end{theorem}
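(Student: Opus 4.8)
The plan is to read off the graded $\mathfrak{gl}_n$-character of the resolved module from the resolution (\ref{freeres}) in two different ways and to equate them. Because (\ref{freeres}) is an exact complex of $\Z$-graded modules, the Euler--Poincar\'e principle guarantees that the alternating sum of the torus characters of the terms reproduces the character of the module being resolved. The whole computation then reduces to evaluating three characters: that of $U\n$, that of the multiplicity spaces $E_k$, and that of $\f$ itself.

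First I would factor the free modules. Since each $P_k = E_k\otimes U\n$ is a free right $U\n$-module, its character splits as $\mathrm{char}(E_k)\cdot\mathrm{char}(U\n)$, and as $\mathrm{char}(U\n)$ is common to every term it factors out of the alternating sum:
\[
\mathrm{char}(\f^{\ast}) \;=\; \mathrm{char}(U\n)\,\sum_{k=0}^{N}(-1)^{k}\,\mathrm{char}(E_k)\ .
\]
By the PBW theorem $U\n\cong S(V)\otimes S(\wedge^2 V)$ as graded $\mathfrak{gl}_n$-modules, so its character is the product of the two geometric factors $\prod_i(1-\,\cdot\,)^{-1}$ and $\prod_{i<j}(1-\,\cdot\,)^{-1}$ over the weights of $V$ and of $\wedge^2 V$. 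Passing to the annihilation side (equivalently, dualizing via the anti-involution $\dagger$ so that the generators $a^i$ and $[a^i,a^j]$ carry the weights $x_i$ and $x_i x_j$), this character becomes the reciprocal of the denominator $\prod_i(1-x_i)\prod_{i<j}(1-x_i x_j)$ appearing in (\ref{conj}).

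Next I would insert Theorem \ref{c1}. The decomposition (\ref{Hkp}) identifies $E_k^{\ast}\cong H^k(\n,\f)$ as a sum over self-conjugate diagrams $\mu=\mu'$ with cohomological degree rigidly fixed by $k=\demi(|\mu|+r(\mu))$; dualizing, $\mathrm{char}(E_k)$ is the sum of the $\mathfrak{gl}_n$-characters of the modules $V^{\mu^{(p)}-(p/2)^n}$ over those $\mu$ with $\demi(|\mu|+r(\mu))=k$. Each such character equals $e^{p\theta}s_{\mu^{(p)}}(x)$, because the uniform translation of the highest weight by $-(p/2)^n$ merely multiplies the Schur polynomial $s_{\mu^{(p)}}$ by the determinantal power $(x_1\cdots x_n)^{-p/2}=e^{p\theta}$. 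Hence the alternating sum equals $e^{p\theta}\sum_{\mu=\mu'}(-1)^{\demi(|\mu|+r(\mu))}s_{\mu^{(p)}}(x)$, the numerator of (\ref{conj}) up to the prefactor $e^{p\theta}$. Finally the character of the resolved module is furnished by the Bracken--Green branching / Weyl character formula (\ref{charMac}), namely $\chi^{\Lambda}=e^{p\theta}\sum_{\lambda:\,l(\lambda')\le p}s_\lambda(x)$. Substituting all three expressions into the Euler--Poincar\'e relation, the two factors $e^{p\theta}$ cancel and, after dividing through by the denominator, one is left with exactly the identity (\ref{conj}).

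The step I expect to be the main obstacle is not any individual character computation but the consistent bookkeeping of the dualizations and of the weight shift. One must verify that resolving $\f^{\ast}$ by right $U\n$-modules is equivalent, through the anti-involution $\dagger$ exchanging $\n$ and $\n^{\ast}$ together with the self-duality $\f^{\ast}\cong\f$ (the longest Weyl element of $\mathfrak{so}_{2n+1}$ acts as $-1$), to the annihilation-algebra picture in which $\mathrm{char}(U\n)$ produces precisely $\prod_i(1-x_i)^{-1}\prod_{i<j}(1-x_i x_j)^{-1}$; and simultaneously that the shift $-(p/2)^n$ occurring in (\ref{Hkp}) generates the very same factor $e^{p\theta}$ that multiplies the Schur sum in (\ref{charMac}), so that the two cancel cleanly. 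Once these two normalizations are aligned, the identity drops out of the exactness of (\ref{freeres}) with no residual computation.
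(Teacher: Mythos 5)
Your proposal is correct and follows essentially the same route as the paper: exactness of the minimal free resolution gives the Euler--Poincar\'e identity in the $\mathfrak{gl}_n$-character ring, PBW supplies the denominator $\prod_i(1-x_i)\prod_{i<j}(1-x_ix_j)$ from $ch\,U\n$, Theorem \ref{c1} supplies the alternating numerator from the multiplicity spaces $E_k$, and the Bracken--Green expansion (\ref{charMac}) supplies the right-hand side, with the overall factor $e^{p\theta}=e^{\Lambda}$ cancelling. The dualization bookkeeping you flag as the main obstacle is handled in the paper by the same observation you make, namely the self-conjugacy $\f\cong\f^{\ast}$ (noted there in a footnote), so there is nothing substantively different between the two arguments.
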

%{\bf Proof.}
\begin{proof}
In general, the  mapping of modules of an algebra  into its Grothendieck ring
of characters
 is an example of Poincar\'e-Euler characteristic.
The free resolution (\ref{freeres}) is naturally a (reducible) $\mathfrak{gl}(V)$-module and the Schur functions (\ref{schurp}) span the
ring of $\mathfrak{gl}(V)$-characters.
All the homology of a resolution is concentrated in degree 0, hence
on the RHS of (\ref{conj}) stays the character of 
the self-conjugated\footnote{The self-conjugacy $\f\cong \f^\ast$
allows to switch between  $x_i:=exp(\pm e_i)$ without a conflict.} module $\f$(\ref{charMac}) 
 $$ch \f =ch \f^\ast = e^{-p \theta} \sum_{\lambda \subseteq (p^n)} s_\lambda(x) \qquad x_i:=exp(e_i) \ .$$
 From the Poincar\'e-Birkhof-Witt theorem follows that the
 character of $P_k$ reads % (see \cite{Chaturvedi, D-VP}) is
 $$
ch\, P_k= ch (E_k\otimes U\n) =  \frac{e^{-\Lambda} s_{\mu^{(p)}}(x)}{\prod_i (1- x_i) \prod_{i<j} (1-x_i x_j)} \ .
 $$
 Thus the alternating sum on the LHS comes from the 
 characters of the  $\mathfrak{gl}(V)$-modules $ E_k\otimes U\n$ taken with alternating signs
 corresponding to  the homological degree. The factor $e^{p \theta}=e^\Lambda$ accounting for  the weight of the HW vector $|\Lambda \rangle$ cancels which proves
 the  parafermionic sign-alternating
  identity (\ref{conj}). \qed
 \end{proof}
 {\bf Remark.}  
 The free minimal resolution  of the trivial module 
 $\C$ %\cong \mathcal V(p=0)$ 
 constructed by J\'ozefiak and Weyman \cite{JW} with the help of the
 the homologies $H_{k}(\n,\C)$ corresponds to
 the resolution $P_{\bullet}$ (\ref{freeres})
 of $\C \cong \mathcal V(p=0)$.
 
 %$\hfill \boxed{}$
% found in the work \cite{SVdJ} of Stoilova and Van der Jeugt.
 
 The parafermionic sign-alternating identity (\ref{conj}) was proposed by 
 Stoilova and Van der Jeugt in their study of parafermionic Fock space \cite{SVdJ}.
The  parabosonic Fock space 
  has been explored in \cite{LSVdJ} where
 the ``super-symmetric partner'' of the identities   (\ref{conj}) has been proposed (for a combinatorial proof see \cite{King})
 \beq
\label{conj'}
\frac{\sum_{\mu: \mu=\mu'}  (-1)^{\demi(|\mu|+r(\mu)) }s_{[\mu^{(p)}]^\prime}(x)}{{\prod_{i} {(1-x_{i})} 
\prod_{i<j}{(1-x_{i}x_{j}} ) }}
 =\sum_{\lambda: l(\lambda)\leq p} s_\lambda(x) \ .
\eeq
The parity functor $\Pi$ switches parafermionic {\it even} generators
to parabosonic {\it odd} generators, thus
$\g =\mathfrak{so}_{2n+1} \stackrel{\Pi}{\rightarrow} \tilde{\g}=\mathfrak{osp}_{1|2n}$. The effect of  $\Pi$ is
the passage $\lambda \stackrel{\Pi}{\rightarrow} \lambda^\prime$.
The  identity (\ref{conj'}) is rooted into a minimal free resolution of the
parabosonic Fock space $\tilde{\mathcal V}(p)=\Pi \f$
 by free $U\tilde{\n}$-modules of the nilpotent Lie super-algebra $\tilde{\n}\subset \tilde{\g}$.

 More generally, one can consider the parastatistics Fock space
 ${\mathcal V}_{n|m}(p)$ of the  parastatistics Lie super-algebra
 $\g_{n|m}:=\mathfrak{osp}_{2n+1|2m}$ with $n$ parafermionic and $m$ parabosonic modes. We conjecture that there exists a complex of free $U\n_{n|m}$-modules of the maximal  nilpotent Lie superalgebra $\n_{n|m}\subset \mathfrak{osp}_{2n+1|2m}$
 whose cohomology is  ${\mathcal V}_{n|m}(p)$.
% by free $U\n_{n|m}$-modules of the maximal  nilpotent Lie superalgebra $\n_{n|m}\subset \mathfrak{osp}_{2n+1|2m}$. 
Then the Euler-Poincare characteristics of such a complex will  yield one more identity (which was obtained by different method in \cite{LP}) %(generalizing 
%identities (\ref{conj}) and (\ref{conj'}))
 \[
%\label{identp}
 \frac{ \prod_{i<j \,,\,\hat{i}\neq \hat{j}} (1+x_{i}x_{j}) \sum_{\mu: \mu=\mu'} (-1)^{\frac{1}{2}(|\mu| + r(\mu))} hs_{\mu^{(p)}}(x)}
 { \prod_{i}(1-x_{i}) \prod_{i<j \,,\,\hat{i}=\hat{j}}{(1-x_{i}x_{j})}  }
=\sum_{\lambda:\, \lambda_1\leq { p}} hs_{\lambda}(x) \ . 
\]
Here % $hs_{\lambda}(x)$ stands for
 the $(n|m)$-hook Schur polynomial $hs_{\lambda}(x)$  is the 
  character of the irreducible $\mathfrak{gl}_{n|m}$-module $ V^{\lambda}$, $hs_{\lambda}(x)=ch\,  V^{\lambda} $. The non-trivial 
  $\mathfrak{gl}_{n|m}$-modules $V^{\lambda}$
   are labelled by diagrams $\lambda$ such that 
   $\lambda_{n+1}\leq m$.

\end{document}